\newtheorem{theorem}{Theorem}
\newaliascnt{lemma}{theorem} \newtheorem{lemma}[lemma]{Lemma}
\newaliascnt{corollary}{theorem}
\newaliascnt{definition}{theorem}
\newtheorem{definition}[definition]{Definition}
\newaliascnt{observation}{theorem}
\newtheorem{observation}[observation]{Observation}
\title{Bipartite and Series-Parallel Graphs\\ Without Planar Lombardi Drawings}
\author{David Eppstein\thanks{Department of Computer Science, University of California, Irvine. Research supported in part by NSF grants CCF-1618301 and CCF-1616248.}}
\begin{document}
\maketitle

\begin{abstract}
We find a family of planar bipartite graphs all of whose Lombardi drawings (drawings with circular arcs for edges, meeting at equal angles at the vertices) are nonplanar. We also find families of embedded series-parallel graphs and apex-trees (graphs formed by adding one vertex to a tree) for which there is no planar Lombardi drawing consistent with the given embedding.
\end{abstract}

\section{Introduction}

\emph{Lombardi drawing} is a style of graph drawing using curved edges. In this style, each edge must be drawn as a circular arc, and consecutive edges around each vertex must meet at equal angles. Many classes of graphs are known to have such drawings, including regular bipartite graphs and all $2$-degenerate graphs (graphs that can be reduced to the empty graph by repeatedly removing vertices of degree at most two)~\cite{DunEppGoo-JGAA-12}. This drawing style can significantly reduce the area usage of tree drawings~\cite{DunEppGoo-DCG-13}, and display many of the symmetries of more general graphs~\cite{DunEppGoo-JGAA-12}.

When a given graph is planar, we would like to find a planar Lombardi drawing of it. When this is possible, the resulting drawings have simple edge shapes, no crossings, and optimal angular resolution, all of which are properties that lead to more readable drawings. It is known that all Halin graphs have planar Lombardi drawings~\cite{DunEppGoo-JGAA-12}, that all 3-regular planar graphs~\cite{Epp-DCG-14} and all 4-regular polyhedral graphs~\cite{KinKobLof-GD-2017} have planar Lombardi drawings, and that all outerpaths have planar Lombardi drawings~\cite{DunEppGoo-JoCG-18}. For some other classes of planar graphs, even when a Lombardi drawing exists, it might not be planar.  Classes of planar graphs that are known to not always be drawable planarly in Lombardi style include the nested triangle graphs~\cite{DunEppGoo-JGAA-12}, 4-regular planar graphs~\cite{Epp-DCG-14}, planar 3-trees~\cite{DunEppGoo-JoCG-18}, and the graphs of knot and link diagrams~\cite{KinKobLof-GD-2017}.

However, for several other important classes of planar graphs, the existence of a planar Lombardi drawing has remained open. These include the outerplanar graphs, the series-parallel graphs, and the planar bipartite graphs. Outerplanar and series-parallel graphs are 2-degenerate, and always have Lombardi drawings. Planar bipartite graphs are 3-degenerate and such graphs usually have Lombardi drawings.\footnote{The only obstacle to Lombardi drawing for 3-degenerate graphs is the forced placement of two vertices on the same point, but the only examples for which this is known to happen are neither planar nor bipartite~\cite{DunEppGoo-JGAA-12}.}
However the known Lombardi drawings for these graphs are not necessarily planar. In this paper we settle this open problem for two of these classes of graphs, the planar bipartite graphs and the (embedded) series-parallel graphs. We construct a family of planar bipartite graphs whose Lombardi drawings are all nonplanar. We also construct a family of series-parallel graphs with a given embedding such that no planar Lombardi drawing respects that embedding.
 Our construction for series-parallel graphs can be extended to maximal series-parallel graphs, to bipartite series-parallel graphs and to apex-trees, the graphs formed by adding a single vertex to a tree.

\section{The graphs}

\begin{figure}[t]
\centering\includegraphics[width=0.8\columnwidth]{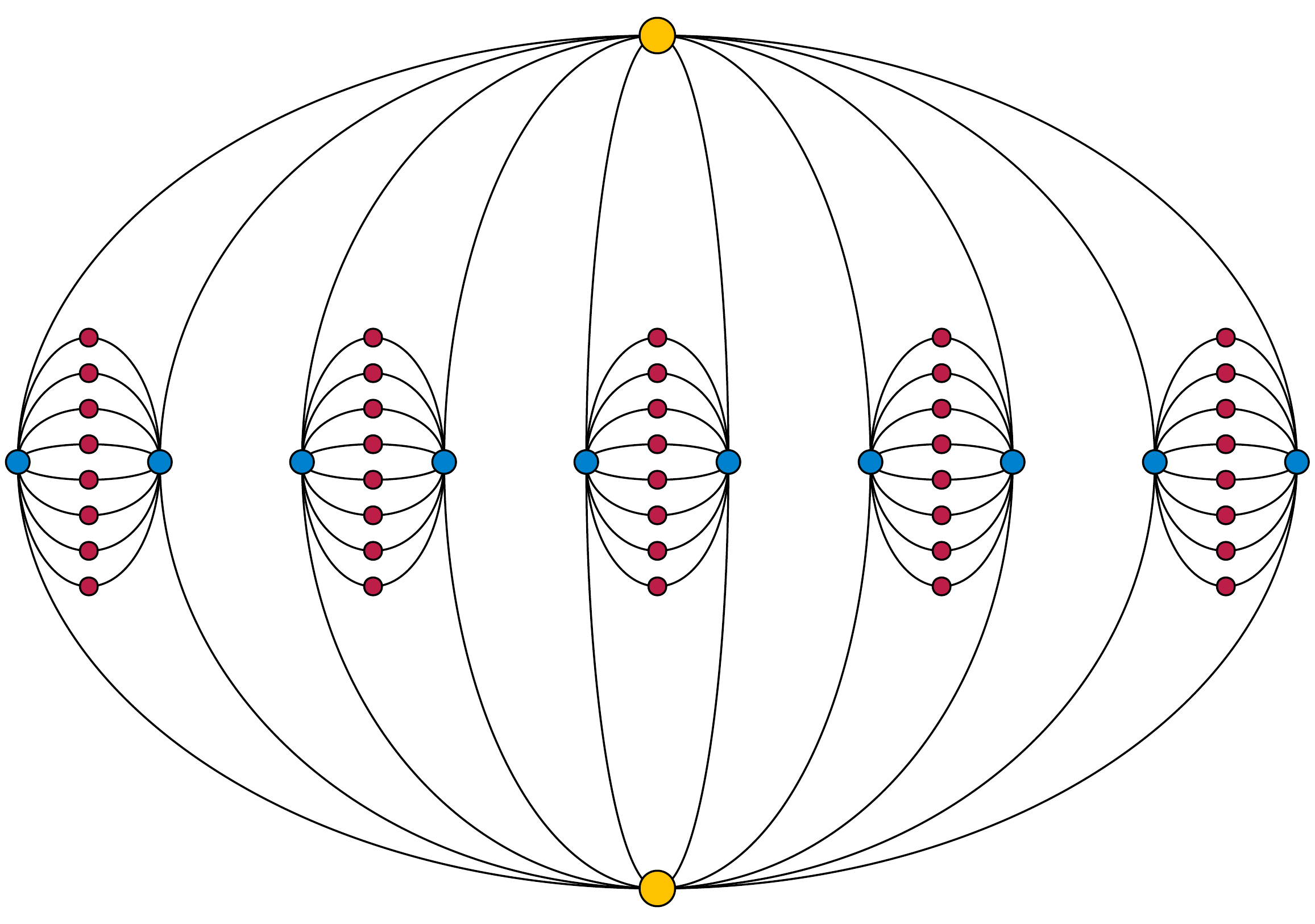}
\caption{The bipartite graph $B(5)$ formed by our construction. Although drawn planarly with curved edges, this is not a Lombardi drawing: the edges are arcs of ellipses rather than of circles, and pairs of consecutive edges at the same vertex do not all have the same angles.}
\label{fig:nested-K2n}
\end{figure}

We begin by describing the family of planar bipartite graphs $B(k)$ that we will prove (for sufficiently large $k$) do not have a planar Lombardi drawing. Each vertex in $B(k)$ has degree either $2$ or $2k$. To construct $B(k)$, begin with a complete bipartite graph $K_{2,2k}$ and its unique planar embedding; in \autoref{fig:nested-K2n}, the two-vertex side of the bipartition of this graph is shown by the yellow vertices and the $2k$-vertex side is shown by the blue vertices. Each yellow vertex has exactly $2k$ blue neighbors.

Next, partition the blue vertices into $k$ pairs of vertices, each sharing a face. For each pair of blue vertices in this partition, add another complete bipartite graph $K_{2,2k-2}$ connecting these two blue vertices to $2k-2$ additional vertices (shown as red in the figure). After this addition, each blue vertex has exactly $2k$ neighbors, two of them yellow and the rest red. Each red vertex has exactly two neighbors. There are two yellow vertices, $2k$ blue vertices, and $k(2k-2)$ red vertices,
for a total of $2k^2+2$ vertices in the overall graph.

Clearly, the graphs $B(k)$ are all planar, because they are formed by attaching together planar subgraphs (complete bipartite graphs where one side has two vertices) on pairs of vertices that are cofacial in both subgraphs. They are bipartite, with the yellow and red vertices on one side of their bipartition and the blue vertices on the other side. Although they are not $3$-vertex-connected, all of their planar embeddings are isomorphic.

\begin{figure}[t]
\centering\includegraphics[width=0.8\columnwidth]{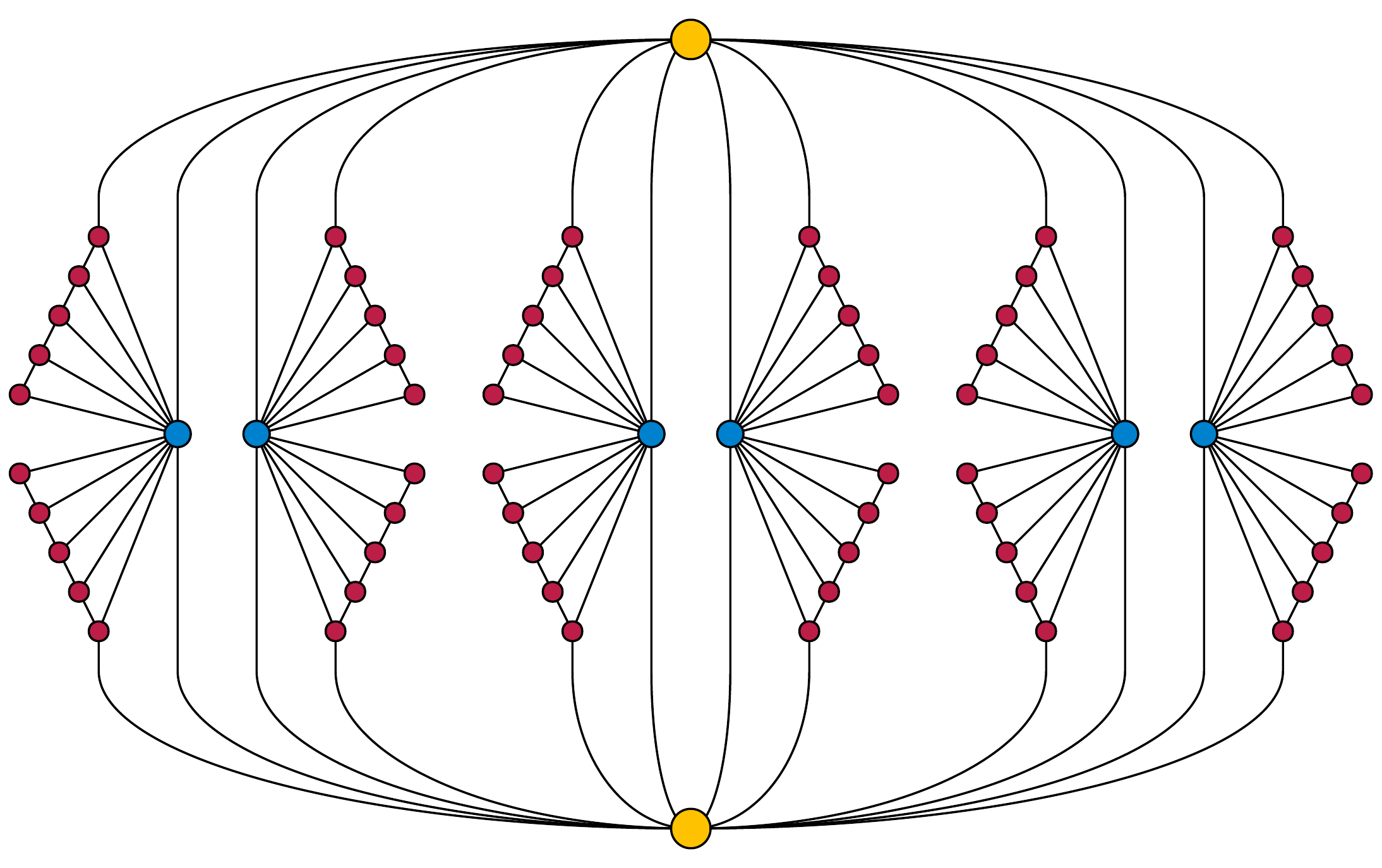}
\caption{The embedded series-parallel graph $S(3)$ formed by our construction.}
\label{fig:series-parallel}
\end{figure}

Analogously, we define a family of embedded series-parallel graphs $S(k)$. Again, each such graph will have two yellow vertices and $2k$ blue vertices, connected in the pattern of a complete bipartite graph $K_{2,2k}$. For each yellow--blue edge $e$ of this graph, we add a path of $2k-1$ red vertices.
We connect every vertex in this path to the blue endpoint of $e$, and we connect one endpoint of the path to the yellow endpoint of $e$. We fix an embedding of $S(k)$ in which every yellow--blue quadrilateral contains either zero or four red paths (\autoref{fig:series-parallel}). The resulting graph has two yellow vertices, $2k$ blue vertices, and $4k(2k-1)$ red vertices, for a total of $8k^2-2k+2$ vertices. The yellow and blue vertices have degree $4k$, while the red vertices have degrees two or three.

We claim that, for sufficiently large values of $k$, the graphs $G(k)$ and $S(k)$ do not have planar Lombardi drawings. Therefore, neither every planar bipartite graph nor every series-parallel graph has a planar Lombardi drawing. In the remainder of this paper we prove this claim.

\section{Equiangular arc-quadrilaterals}

The key feature of both of our graph constructions $B(k)$ and $S(k)$ is the existence of many yellow--blue quadrilateral faces in which all vertices have equal and high degree (this degree is $d=2k$ in $B(k)$ and $d=4k$ in $S(k)$). If such a graph is to have a Lombardi drawing, each of these faces must necessarily be drawn as a quadrilateral with circular-arc sides and with the same interior angle $2\pi/d$ at all four of its vertices. Equiangular arc-quadrilaterals have been investigated before from the point of view of conformal mapping~\cite{BroPor-CMFT-12}; in this section we investigate some of their additional properties.

Our main tool is the following lemma:

\begin{lemma}
\label{lem:cyclic}
Let $abcd$ be a non-self-crossing quadrilateral in the plane with circular-arc sides and equal interior angles.
Then the four points $abcd$ lie on a circle and the quadrilateral $abcd$ either lies entirely inside or entirely outside the circle.
\end{lemma}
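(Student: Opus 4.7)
The plan is to prove the two conclusions in sequence: first concyclicity, then the inside/outside dichotomy. Throughout, the main bookkeeping device is the signed tangent-chord angle of each arc.

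For each arc $e$, let $\beta_e$ denote its tangent-chord angle at either endpoint (the same at both, by the symmetry of a circular arc), signed positive if $e$ bulges outward from the arc quadrilateral's interior and negative if it bulges inward. The key local identity is that the interior angle $\theta_v$ of the arc quadrilateral at each vertex equals the interior angle $\phi_v$ of the straight quadrilateral $abcd$ plus the two signed bulge angles of the arcs at $v$. Summing over all four vertices doubles each $\beta_e$ and uses $\sum_v\phi_v=2\pi$ for the (simple) chord polygon, so $\sum_e\beta_e=2\alpha-\pi$. Summing instead by opposite pairs $\{a,c\}$ and $\{b,d\}$, each arc contributes at exactly one vertex of each pair, giving $\theta_a+\theta_c=\phi_a+\phi_c+\sum_e\beta_e=2\alpha$; combined with $\sum_e\beta_e=2\alpha-\pi$, this forces $\phi_a+\phi_c=\pi$ and symmetrically $\phi_b+\phi_d=\pi$. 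This is the classical cyclic-quadrilateral condition, so $a,b,c,d$ lie on a common circle.

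For the dichotomy, I apply a M\"obius transformation sending the circumscribing circle to the real axis and one side of the disk to the upper half-plane; then $a<b<c<d$ lie on the axis and each arc is either above or below it. Let $\tilde\beta_e$ be the bulge angle signed by the side of the axis containing arc $e$. Writing out the tangent directions at each vertex and measuring through the interior, the equiangular condition becomes four linear equations in the four $\tilde\beta_e$. When the quadrilateral's interior lies entirely in the upper half-plane, these equations force $\tilde\beta_{ab}=\tilde\beta_{cd}$, $\tilde\beta_{bc}=\pi-\alpha-\tilde\beta_{ab}$, and $\tilde\beta_{da}=\alpha+\tilde\beta_{ab}$, each strictly positive, so all four arcs lie above the axis; reflection across the axis handles the symmetric case of interior below. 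The remaining ``mixed'' configurations, with arcs on both sides of the axis, must be excluded: in each such case the interior angle at some vertex (where the interior locally wraps across the axis) takes the form $\pi+\tilde\beta_e+\tilde\beta_{e'}$ with the two $\tilde\beta$ terms of matching sign, which cannot equal $\alpha$ so long as $\alpha<\pi$, a condition guaranteed by non-self-crossing combined with the identity $\phi_v<\pi$. Transporting back through the M\"obius transformation, all four arcs lie on the same side of the circumscribing circle.

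The main obstacles will be the careful sign bookkeeping in the local bulge-angle identity and the exhaustive elimination of the mixed configurations, where one must track how the interior of the arc polygon switches sides of the real axis and enumerate the few essentially different sign patterns for the $\tilde\beta_e$. A subsidiary point, which I expect to be routine, is verifying that the chord polygon $abcd$ is simple (in fact convex), which follows from the non-self-crossing of the arc polygon together with the identity $\phi_a+\phi_c=\phi_b+\phi_d=\pi$ preventing any $\phi_v$ from being reflex.
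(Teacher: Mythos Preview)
Your approach is genuinely different from the paper's. The paper uses a purely M\"obius-geometric symmetry argument: it transforms the configuration so that opposite arcs lie on congruent circles whose centers form a rhombus, and then reads off concyclicity and the inside/outside dichotomy from the symmetry of that picture. Your route via signed tangent--chord (bulge) angles is more elementary and gives a clean reason for concyclicity: because each arc contributes its bulge once to each opposite pair of vertices, the bulges cancel in $\theta_a+\theta_c$ versus $\theta_b+\theta_d$, forcing $\phi_a+\phi_c=\phi_b+\phi_d$; combined with the total $2\pi$, this yields the cyclic-quadrilateral condition. One small circularity to clean up: you invoke $\sum_v\phi_v=2\pi$ for the ``simple'' chord polygon before you have established that the chord polygon is simple, and then later cite $\phi_a+\phi_c=\pi$ to justify simplicity. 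This is easily repaired by running the turning-number argument on the simple \emph{arc} quadrilateral instead, which directly gives $\sum_e\beta_e=2\alpha-\pi$ without any assumption on the chord polygon.

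The real gap is in the dichotomy step. You exclude the mixed configurations by arguing that at a vertex where the interior locally wraps across the axis the interior angle takes the form $\pi+\tilde\beta_e+\tilde\beta_{e'}$, which cannot equal $\alpha$ provided $\alpha<\pi$, and you then assert that $\alpha<\pi$ is ``guaranteed by non-self-crossing combined with the identity $\phi_v<\pi$.'' That implication is false: $\alpha=\phi_v+\beta_{e_1}+\beta_{e_2}$ can exceed $\pi$ even when $\phi_v<\pi$, and the paper's own proof explicitly treats the case $\alpha>\pi$ (it is exactly the ``lies entirely outside the circle'' alternative, arising as the outer boundary of the union of the four disks). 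So your case analysis as written does not cover the full statement of the lemma. The situation is salvageable---if you push the linear system for the $\tilde\beta_e$ through in a representative mixed case you find a relation such as $\tilde\beta_{da}=\pi+\tilde\beta_{bc}$, forcing a tangent--chord angle to exceed $\pi$, which is impossible for an embedded arc---but you will need to supply that argument (and run it over the several sign patterns) rather than appeal to $\alpha<\pi$.
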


\begin{proof}
We abbreviate the conclusion of the lemma by saying that $abcd$ is cyclic.
The properties of being an equiangular non-self-crossing circular-arc quadrilateral and of being cyclic are both invariant under M\"obius transformations, which preserve both circularity and the crossing angles of curves.
Therefore, if we can find a M\"obius transformation of a given equiangular circular-arc quadrilateral such that the transformed quadrilateral is cyclic, the original quadrilateral will also be cyclic, as the lemma states it to be.

Start by finding a M\"obius transformations which makes two opposite arcs $ab$ and $cd$ come from circles with the same radius as each other. Because of the equality of crossing angles, and by symmetry, both of the other two circular arcs $bc$ and $ad$ must come from circles whose centers lie on the perpendicular bisector of $ab$ and $cd$.
There remains a one--dimensional family of M\"obius transformations that preserve the position of the circles containing the transformed copies of arcs $ab$ and $cd$ but that move the other two circles along the bisector of these two fixed circles. We can use this remaining degree of freedom to move the other two circles so that their centers are equidistant from the midpoint of the centers of the two fixed circles.

\begin{figure}[t]
\centering\includegraphics[width=0.8\columnwidth]{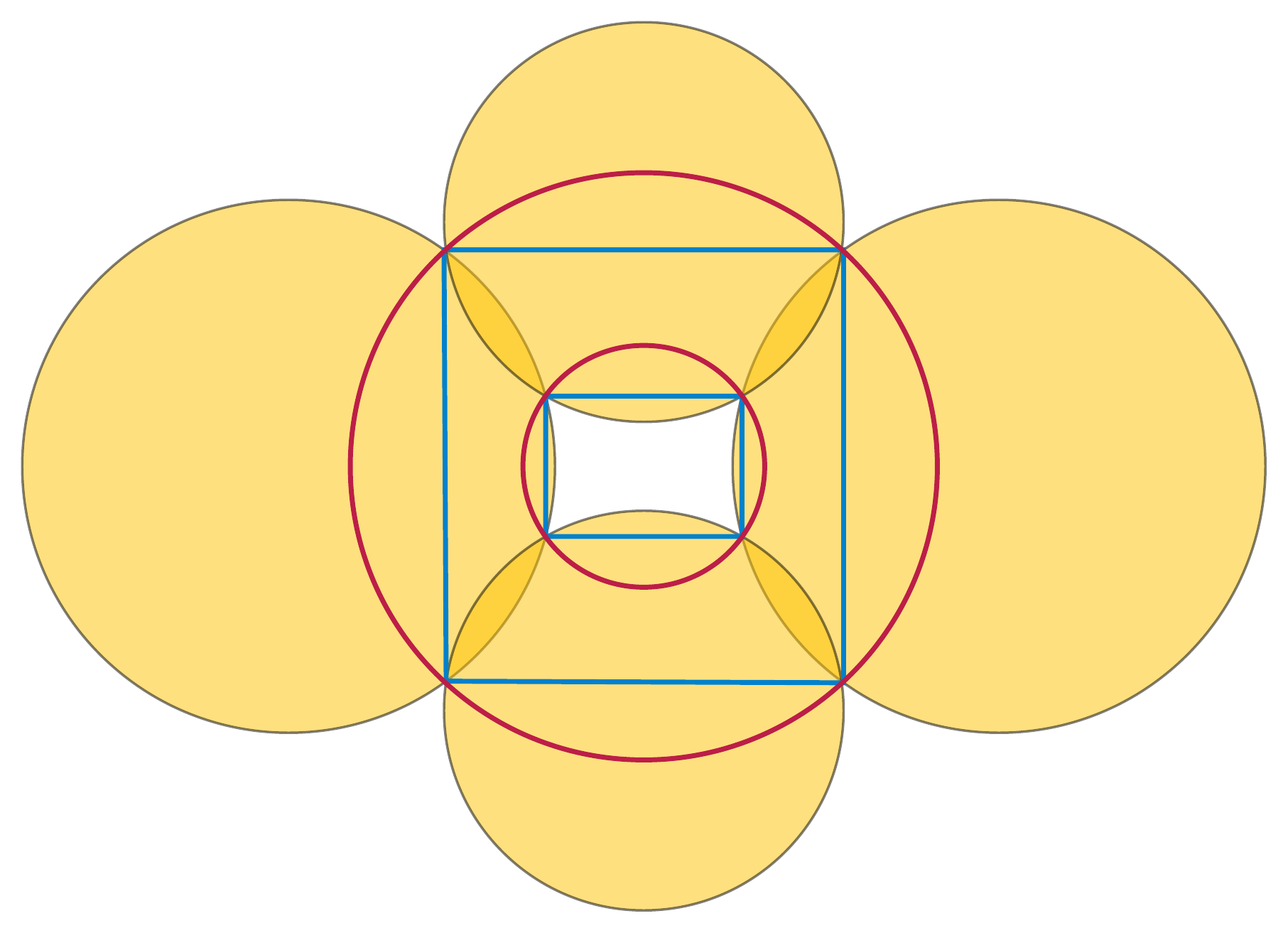}
\caption{Illustration for \autoref{lem:cyclic}: Four circles with centers on a rhombus, and with opposite pairs of circles having equal radii, define two rectangles of pairwise intersection points.}
\label{fig:4circle-rect}
\end{figure}

After this transformation, it follows from the equality of crossing angles that the circles containing the transformed copies of arcs $bc$ and $ad$ have the same radii as each other, and the four circles have been transformed into a position centered at the vertices of a rhombus with opposite pairs having the same radius as each other. By symmetry, the transformed copies of vertices $abcd$ must lie on one of the two rectangles defined by the crossing points of these four transformed circles.
(\autoref{fig:4circle-rect}).

If the interior angle of $abcd$ is less than $\pi$, then the transformed copy of $abcd$ must lie within the circle that circumscribes the inner rectangle, forming the boundary of a hole in the union of the four transformed disks. If the interior angle is greater than $\pi$, it must lie  outside the outer circle, forming the outer boundary of the union of the four disks. In either case, the transformed copy of $abcd$ is cyclic, so $abcd$ itself must be cyclic.
\end{proof}

A special case of this lemma for right-angled arc-quadrilaterals was used previously by the author to prove that some 4-regular planar graphs have no planar Lombardi drawing~\cite{Epp-DCG-14}.
Another special case, for arc-quadrilaterals in which all interior angles are zero, has been used previously in mesh generation~\cite{BerMitRup-DCG-95}.

\begin{definition}
We define the \emph{tilt} of an equiangular circular-arc quadrilateral to be the maximum interior angle of any of the four circular-arc bigons between the quadrilateral and its enclosing circle.
\end{definition}

Each of the four bigons has equal angles at its two vertices. At each of the four vertices, the two bigon angles and the interior angle of the quadrilateral add to $\pi$. It follows that opposite bigons have the same angles as each other, and each vertex of the quadrilateral is incident to a bigon with vertices of the tilt angle.

\section{Bipolar coordinates}

\begin{figure}[t]
\centering\includegraphics[width=0.65\columnwidth]{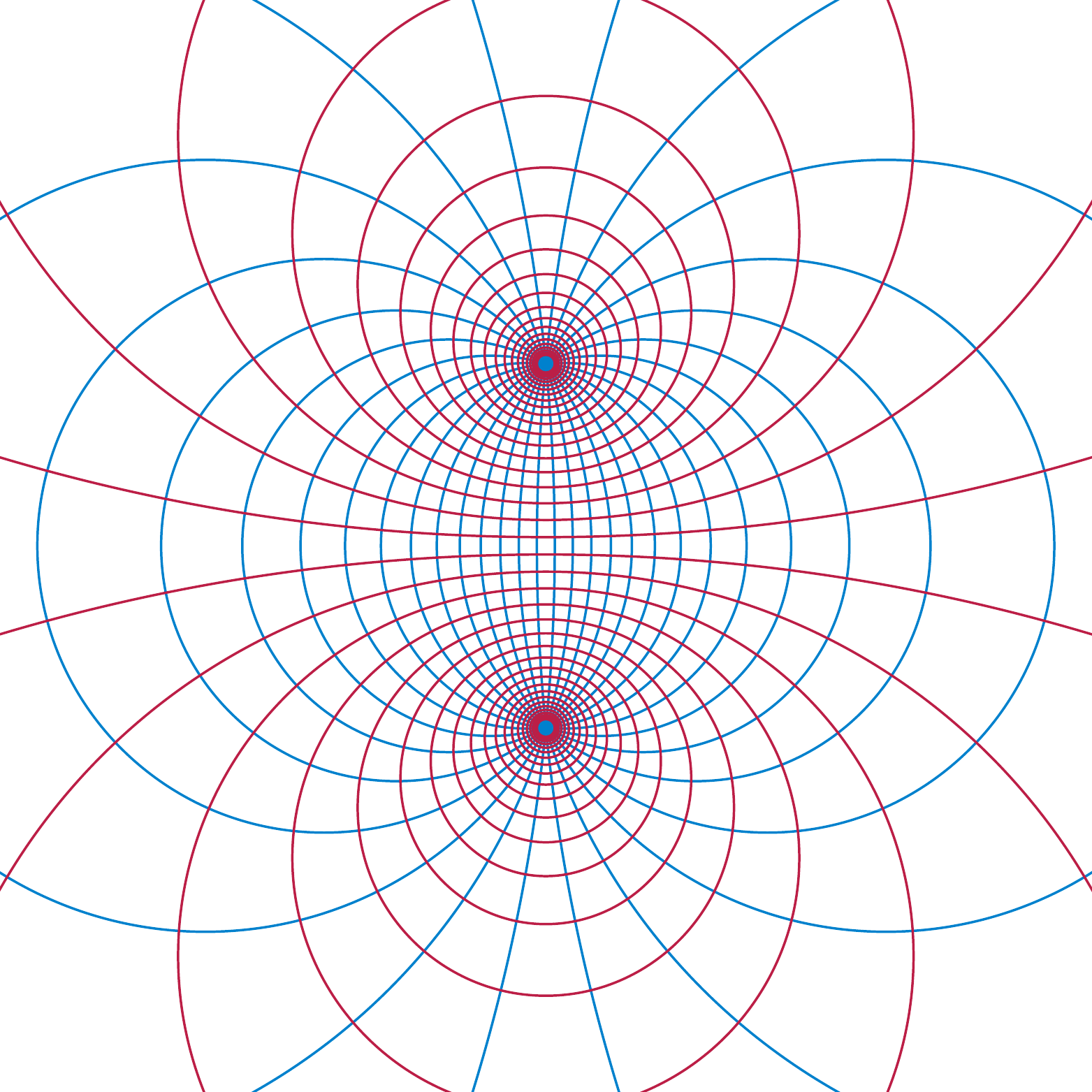}
\caption{Curves of constant and evenly-spaced coordinate values for bipolar coordinates, forming two orthogonal pencils of circles.}
\label{fig:apollo}
\end{figure}

To describe a second parameter of equiangular circular-arc quadrilaterals, it is convenient to introduce the \emph{bipolar coordinate system}, defined from a pair of points $s$ and $t$, the \emph{foci} of the coordinate system.
These coordinates are conventionally denoted $\sigma$ and $\tau$.
The $\sigma$-coordinate $\sigma_p$ of a point $p$ is the (oriented) angle $spt$, whose level sets are the blue circular arcs through the two foci in \autoref{fig:apollo}. The $\tau$-coordinate $\tau_p$ of $p$ is the logarithm of the ratio of the two distances from $p$ to the two foci, whose level sets are the red circles separating the two foci in \autoref{fig:apollo}.
This coordinate system has the convenient property that any (orientation-preserving) M\"obius transformation that preserves the location of the two foci acts by translation on the coordinates.\footnote{This property can be seen as a reflection of the fact that the bipolar coordinate system comes from a conformal mapping of a rectangular grid; see, e.g., \cite{CheTsaLiu-CAEE-09}.}

\begin{lemma}
Any M\"obius transformation that preserves the location of the two foci acts on the bipolar coordinates of any point by adding a fixed value to its $\sigma$-coordinate (modulo $2\pi$) and adding another fixed value to its $\tau$-coordinate, with the added values depending on the transformation but not on the point.
\end{lemma}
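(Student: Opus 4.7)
The plan is to encode the pair $(\tau_p,\sigma_p)$ as a single complex number and then invoke the classification of Möbius transformations fixing two given points.

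First I would introduce the complex-valued coordinate $w(p) := (p-s)/(p-t)$, viewing points of the plane as elements of $\mathbb{C}$. The map $p \mapsto w(p)$ is itself a Möbius transformation carrying $s$ to $0$ and $t$ to $\infty$. Its modulus is $|p-s|/|p-t|$, whose logarithm is exactly $\tau_p$. Its argument is $\arg(p-s)-\arg(p-t)$, which is the oriented angle $\angle spt$ at vertex $p$ between the rays $ps$ and $pt$, and hence equals $\sigma_p$ modulo $2\pi$. Consequently $\log w(p) = \tau_p + i\sigma_p$, so the pair of bipolar coordinates is literally the real and imaginary parts of a single complex function of $p$.

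Next I would describe the stabilizer by conjugation. Let $\phi$ be an orientation-preserving Möbius transformation fixing both foci. Then $\psi := w \circ \phi \circ w^{-1}$ is a Möbius transformation fixing $0$ and $\infty$, and such a transformation must have the form $\psi(z) = \lambda z$ for some $\lambda \in \mathbb{C}^*$ (fixing $\infty$ forces an affine map, and fixing $0$ further forces it to be linear). Therefore $w(\phi(p)) = \lambda \cdot w(p)$ for every $p$, and taking logarithms and separating real and imaginary parts yields
\[
\tau_{\phi(p)} = \tau_p + \log|\lambda|, \qquad \sigma_{\phi(p)} \equiv \sigma_p + \arg\lambda \pmod{2\pi},
\]
which is precisely the claimed translation action, with the translation amounts $\log|\lambda|$ and $\arg\lambda$ depending only on $\phi$ and not on $p$.

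The main thing to watch out for is the identification of $\arg w(p)$ with the oriented angle $\sigma_p$: the sign convention in the paper's definition and the fact that this angle is only well-defined modulo $2\pi$ are exactly what force the action on $\sigma$ to be a translation modulo $2\pi$ rather than a genuine real translation. Apart from that small bookkeeping point, every step is a routine application of standard facts about Möbius maps, so I do not anticipate any substantive obstacle in carrying out the plan.
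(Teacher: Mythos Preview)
Your proof is correct and takes a genuinely different (and cleaner) route than the paper's. You package both coordinates into the single complex function $w(p)=(p-s)/(p-t)$ and observe that conjugating by $w$ sends the stabilizer of $\{s,t\}$ to the stabilizer of $\{0,\infty\}$, namely the dilations $z\mapsto\lambda z$; taking logarithms then gives the additive action on both coordinates simultaneously.

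The paper instead handles $\sigma$ and $\tau$ separately. For $\sigma$ it argues geometrically that focus-preserving M\"obius maps permute the level circles through the foci and that angle preservation forces additivity. For $\tau$ it normalizes the foci to $\pm 1$, solves explicitly for the form $q\mapsto(aq+b)/(bq+a)$ of a focus-fixing transformation, and then checks by direct calculation on real points of $[-1,1]$ that distance ratios are multiplied by the constant $(a+b)/(a-b)$. Your conjugation argument is shorter, treats the two coordinates uniformly, and makes transparent why the lemma holds (bipolar coordinates are simply the conformal pullback of log-polar coordinates); the paper's argument is more hands-on and avoids the conjugation idea, at the cost of a longer and less symmetric computation. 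Your caveat about the sign convention in identifying $\arg w(p)$ with $\sigma_p$ is well placed but, as you note, immaterial to the conclusion.
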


\begin{proof}
All  M\"obius transformations preserves circles, incidences between points and curves, and angles between pairs of incident curves. Therefore, any focus-preserving M\"obius transformation takes circles through the two foci (the level sets for $\sigma$-coordinates) to other circles through the two foci, and it takes the perpendicular family of circles (the level sets for $\tau$-coordinates) to other circles in the same family. Therefore it acts separately on the $\sigma$- and $\tau$-coordinates. The additivity of its action on the $\sigma$-coordinates follows from the preservation of angles between pairs of circles through the two foci.

To show that the transformation acts additively on $\tau$-coordinate (the logarithm of the  ratio of distances of a point from the two foci), we can assume without loss of generality (by scaling, translating, and rotating the plane if necessary) that the two foci are at the two points $q=\pm 1$ of the complex plane. Consider the general form
$q\mapsto (aq+b)/(cq+d)$ of a M\"obius transformation as a fractional linear transformation of the complex plane. For a transformation to fix $q=1$ we need $a+b=c+d$ and for it to fix $q=-1$ we need $b-a=-(d-c)$. Solving these two equations in four unknowns gives $a=d$ and $b=c$. Therefore, the transformations fixing the foci take the special form $q\mapsto (aq+b)/(bq+a)$.

For a transformation of this form, and for any point $x$ on the interval $[-1,1]$ of real numbers with distance ratio $(1+x)/(1-x)$,
the image of $x$ has distance ratio
\[\frac{1+(ax+b)(bx+a)}{1-(ax+b)(bx+a)}
=\frac{a+b}{a-b}\cdot\frac{1+x}{1-x},\]
multiplying the original distance ratio of $x$ by a value that depends only on the transformation. Because the transformation acts separately on $\sigma$- and $\tau$-coordinates, we obtain the same multiplicative action on distance ratios for any other point on the complex plane with the same $\tau$-coordinate as $x$.
This multiplicative action on distance ratios translates into an additive action on their logarithms.
\end{proof}

\begin{figure}[t]
\centering\includegraphics[width=0.5\columnwidth]{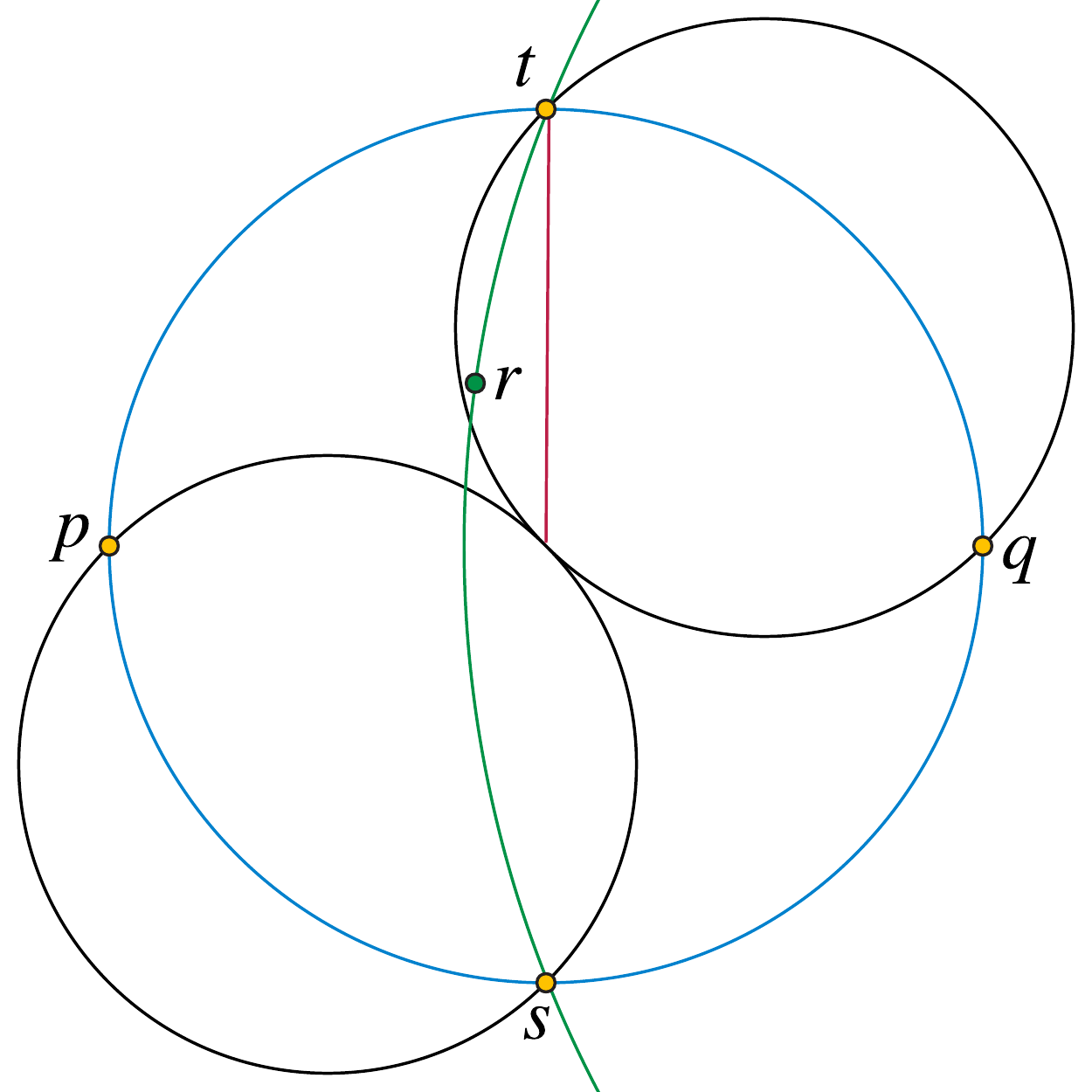}
\caption{Illustration for \autoref{lem:lift}: If quadrilateral $sptq$ has high tilt, and $r$ lies between the quadrilateral and its enclosing circle, to the left of the red arc, then $r$ must have a higher $\tau$-coordinate value than $p$.}
\label{fig:lift}
\end{figure}

Another advantage of bipolar coordinates is that they provide a way of comparing angles at the two foci, that will be convenient for relating the tilts of different quadrilaterals to each other:

\begin{observation}
\label{obs:tilt-angle}
Let $sptq$ be an equiangular arc-quadrilateral with interior angle $\theta$ and tilt $\varphi$.
Then, in the bipolar coordinate system with foci $s$ and $t$,
the angle (difference between the $\sigma$-coordinates) of arc $tp$ in the limit as it approaches $t$ and of  arc $sq$ as it approaches $s$ 
is exactly $2\theta+2\varphi-\pi$.
\end{observation}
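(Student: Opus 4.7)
My plan is to apply \autoref{lem:cyclic}: the four vertices lie on a common enclosing circle $C$. Since $C$ passes through both foci, $C$ is itself a level set of $\sigma$, but the foci split $C$ into two arcs --- one containing $p$ and the other containing $q$ --- whose $\sigma$-values differ by $\pi$. This already contributes a $\pi$-sized piece of the answer just from the cyclic structure.

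Next I would measure how much the tangent to arc $tp$ at $t$, and the tangent to arc $sq$ at $s$, deviate from the tangent to $C$ at the same focus. Opposite bigons have equal angles, and at each vertex the two bigon angles and $\theta$ sum to $\pi$; after relabeling $p\leftrightarrow q$ if necessary, I may assume the tilt bigons lie on sides $sp$ and $tq$, so that the bigons on sides $pt$ and $qs$ each have vertex angle $\pi-\theta-\varphi$. Because the quadrilateral lies inside $C$, at each focus the quad arc's tangent is obtained from $C$'s tangent by rotating through that bigon angle toward the interior of the disk.

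The critical tool is the bipolar conformal map $z\mapsto w=\log\frac{z-s}{z-t}$, whose imaginary part is $\sigma$ and which has logarithmic singularities of opposite signs at the two foci. Consequently, a rotation of the tangent direction at $s$ by $\delta$ shifts $\sigma$ by $+\delta$, whereas the same rotation at $t$ shifts $\sigma$ by $-\delta$. Hence, although the two bigon-angle rotations are geometrically ``on the same side'' (both into the disk interior), they contribute to $\sigma$ with opposite signs at $s$ and $t$. In the difference of $\sigma$-coordinates they therefore accumulate rather than cancel, adding a further $-2(\pi-\theta-\varphi)$ on top of the $\pm\pi$ coming from $C$. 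Combining and reducing modulo $2\pi$ yields $2\theta+2\varphi-\pi$, as claimed.

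The main obstacle I foresee is pinning down all the signs: which arc of $C$ carries $p$ versus $q$, which side of the tangent at each focus is the disk interior, the sense of the bigon-angle rotation, and the orientation convention in the definition of the ``oriented angle $spt$''. I would control these by first applying a focus-preserving M\"obius transformation (allowed because such a transformation translates $\sigma$ and $\tau$ by constants and hence preserves $\sigma$-differences) to normalize $s$ and $t$ to $\pm 1$ and the $\sigma$-value of the $p$-side arc of $C$ to a convenient value; after that, the tangent arguments at the two foci and their $\sigma$-values can be read off by direct computation from the defining formula for $\sigma$.
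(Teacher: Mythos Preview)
The paper states this as an \emph{observation} and gives no proof, so there is nothing to compare against directly; your plan is a genuine proof where the paper offers none.

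Your strategy is sound. Invoking \autoref{lem:cyclic} to place $s,p,t,q$ on a common circle $C$ is the right starting point, since $C$ (being a circle through both foci) is a $\sigma$-level set on each of its two arcs, with the two arcs differing by $\pi$; the bigon-angle bookkeeping (opposite bigons equal, the two bigon angles plus $\theta$ summing to $\pi$ at each vertex) is exactly what the paragraph after the definition of tilt records. The conformal-map viewpoint $z\mapsto\log\frac{z-s}{z-t}$ correctly explains why equal geometric rotations at $s$ and at $t$ feed into $\sigma$ with opposite signs, which is the one nontrivial conceptual point. Your final arithmetic $\pi-2(\pi-\theta-\varphi)=2\theta+2\varphi-\pi$ is correct.

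Two small remarks. First, you could sidestep the circumscribing circle $C$ entirely by instead using the two circles $C_p$ (through $s,t,p$) and $C_q$ (through $s,t,q$): these are honest $\sigma$-level curves with values $\sigma_p$ and $\sigma_q$, and the limiting $\sigma$ along arc $tp$ at $t$ differs from $\sigma_p$ by exactly the bigon angle between side $pt$ and $C_p$ at $t$. This sometimes makes the sign chase shorter because you never have to split $C$ into two arcs with a $\pi$-jump. Second, your relabeling ``$p\leftrightarrow q$ if necessary'' is harmless here, but be aware that in the paper's later use (proof of \autoref{thm:bipartite-nonplanar}) the roles of $p$ and $q$ are fixed by the embedding, so it is worth checking that the formula $2\theta+2\varphi-\pi$ really is symmetric in $p$ and $q$ --- it is, since opposite bigons have equal angles and both $tp$ and $sq$ lie on the same pair of opposite sides.
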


We can also use bipolar coordinates to show that heavily tilted quadrilaterals lead to an increase in $\tau$-coordinate:

\begin{lemma}
\label{lem:lift}
Let $sptq$ be an equilateral arc-quadrilateral with tilt at least $3\pi/4$, such that the large angle between vertex $s$ and the circle $C$ containing the quadrilateral is on the clockwise side of $s$ (the side closest to $p$).
Let $r$ be a point in the bigon between arc $tq$ and $C$, such that circular arc $srt$ makes an angle of at most $\pi/2$ with circular arc $spt$. Then, in the bipolar coordinate system for foci $s$ and $t$, $\tau_r > \tau_p$.
\end{lemma}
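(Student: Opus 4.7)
The plan is to work entirely in the bipolar coordinate system with foci $s$ and $t$, taking advantage of its M\"obius invariance: since any focus-preserving M\"obius transformation translates both $\sigma$ and $\tau$ by fixed amounts, the inequality $\tau_r > \tau_p$ is preserved under such normalizations, so I can freely place the picture in whatever convenient standard position simplifies the analysis. The structural observation that drives the argument is that by \autoref{lem:cyclic} the enclosing circle $C$ itself passes through the two vertices $s$ and $t$; consequently both arc $spt$ (which is an arc of $C$) and the full circle through $s$, $r$, $t$ extending arc $srt$ are circles through both foci. They are therefore $\sigma$-level sets, along which $\tau$ varies monotonically from $-\infty$ at $s$ to $+\infty$ at $t$. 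Moreover the angle between any two circles through both foci equals the difference of their $\sigma$-values, so the hypothesis that arc $srt$ meets arc $spt$ at an angle of at most $\pi/2$ translates cleanly to $|\sigma_r - \sigma_p| \leq \pi/2$, and the orientation hypothesis fixes which $\sigma$-side of arc $spt$ contains~$r$.

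Next I would invoke \autoref{obs:tilt-angle}, combined with the $180^\circ$ rotational symmetry of an equiangular arc-quadrilateral whose opposite vertices $s,t$ are taken as the foci, to determine the $\sigma$-directions in which the various arcs of the quadrilateral leave $s$ and $t$. The observation gives a tangent $\sigma$-offset of $2\theta + 2\varphi - \pi$ for the side arcs, which the tilt lower bound $\varphi \geq 3\pi/4$ forces to be at least $2\theta + \pi/2$; the orientation hypothesis on the large bigon angle at $s$ fixes the sign. Consequently the bigon between arc $tq$ and $C$ begins at $t$ by sweeping through a range of $\sigma$-values that extends well past the $\pi/2$-strip around $\sigma_p$ on the side where $r$ must lie, so the angular restriction on $r$ is compatible with $r$ sitting arbitrarily far into this bigon toward the high-$\tau$ region near~$t$.

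The step I expect to be the main obstacle is turning this qualitative picture into the quantitative inequality $\tau_r > \tau_p$. The natural tool is the Apollonius circle $C_p = \{\tau = \tau_p\}$, the $\tau$-level set through $p$, which is orthogonal to every $\sigma$-circle through the foci. The goal becomes showing that the portion of the bigon between arc $tq$ and $C$ that lies within the width-$\pi/2$ strip of $\sigma$-values around $\sigma_p$ is contained entirely on the $t$-side of $C_p$. I would normalize by a focus-preserving M\"obius transformation so that arc $spt$ lies on a standard $\sigma$-circle, then track where $C_p$ meets the two boundary arcs of the bigon---the side arc $tq$ and the $q$-side arc of $C$ from $t$ to $q$---using the strong curvature of $tq$ implied by the high tilt to show that any such intersections lie outside the allowed $\sigma$-strip. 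Once $C_p$ has been pushed clear of the relevant subregion, the conclusion follows because $r$'s $\sigma$-circle must cross $C_p$ strictly before it reaches $r$, giving $\tau_r > \tau_p$ by monotonicity of $\tau$ along $\sigma$-circles.
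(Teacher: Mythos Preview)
Your plan shares the paper's overall strategy---exploit M\"obius invariance of bipolar coordinates and then argue that the admissible region for $r$ lies on the $t$-side of the $\tau_p$ level set---but it misses the one normalization that makes the paper's proof almost immediate. The paper uses the full M\"obius group to place not only the foci $s$ and $t$ but also the third point $p$ at specific positions: $s$, $p$, $t$ become the bottom, left, and top points of $C$. With this choice $p$ is equidistant from the two foci, so the Apollonius circle $C_p=\{\tau=\tau_p\}$ degenerates to the horizontal line through the center of $C$, and the hypothesis that arc $srt$ makes angle at most $\pi/2$ with arc $spt$ becomes simply ``$r$ lies on the $p$-side of the vertical diameter.'' What remains is one elementary Euclidean observation: at tilt exactly $3\pi/4$ the side arc through $s$ and $p$ passes through the center of $C$, so the intersection of the $tq$-bigon with the left half of the disk is confined to the upper-left quarter of $C$, above the horizontal line; for larger tilt the region only shrinks.

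Your detour through \autoref{obs:tilt-angle} and your plan to ``track where $C_p$ meets the two boundary arcs of the bigon'' are not wrong, but they are doing work that disappears once $p$ is pinned down. You normalize only the $\sigma$-coordinate of $p$ (``so that arc $spt$ lies on a standard $\sigma$-circle''), leaving $\tau_p$ free; this is exactly why $C_p$ remains a nontrivial circle whose intersections you must chase. The paper never invokes \autoref{obs:tilt-angle} in this proof and never reasons about $\sigma$-offsets of the side arcs; it uses the tilt hypothesis directly as a Euclidean incidence (the $3\pi/4$ arc hits the center). So your ``main obstacle'' is largely self-imposed: spend the remaining degree of freedom to set $\tau_p=0$, and the obstacle collapses to reading off which quadrant of a circle a region sits in.
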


\begin{proof}
Because of the M\"obius invariance of coordinate differences in the bipolar coordinate system, we can without loss of generality perform a M\"obius transformation so that $s$, $p$, and $t$ are the bottom, left, and topmost points of $C$, as shown in \autoref{fig:lift}. After this transformation, points above the horizontal line through $p$ will have higher $\tau$-coordinate than $o$, and points below the horizontal line through $o$ will have lower $\tau$-coordinate.

As the figure shows, an arc with tilt exactly $3\pi/4$ through $p$ and $s$ passes through the center of circle $C$, causing the region in which $r$ may lie to be bounded by a vertical line segment (red) from the circle's center to $t$. All points within this region have higher $\tau$-coordinate than $p$. For tilt values greater than $3\pi/4$, the arc from $p$ to $s$ with that tilt extends even farther beyond the center of $C$, so (although arc $tq$ may also extend farther to the left) the region in which $r$ may lie remains bounded within the upper left quarter of $C$, within which all $\tau$-coordinates are greater than that of $p$.
\end{proof}

\section{Nonplanarity}

We are now ready for our main theorems.

\begin{theorem}
\label{thm:bipartite-nonplanar}
For $k>8$, the bipartite graph $B(k)$ does not have a planar Lombardi drawing.
\end{theorem}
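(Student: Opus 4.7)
The plan is to assume for contradiction that $B(k)$ admits a planar Lombardi drawing, and to extract a contradiction from the constraints on the $k$ empty yellow--blue--yellow--blue faces. I would label the yellow vertices $s,t$ and the blue vertices $b_1,\dots,b_{2k}$ in cyclic order around $s$, with the partition pairs $\{b_{2i-1},b_{2i}\}$ filled by the red copies of $K_{2,2k-2}$ and the $k$ remaining empty faces being $F_i = sb_{2i}tb_{2i+1}$ (indices mod $2k$). All four vertices of each $F_i$ have degree $2k$, so by \autoref{lem:cyclic} each $F_i$ is a cyclic equiangular arc-quadrilateral with interior angle $\theta=\pi/k$ and some tilt $\varphi_i$.

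The first main step is to force every $\varphi_i$ above $3\pi/4$ once $k>8$. I would install bipolar coordinates with foci $s,t$: equiangularity at each yellow makes the limiting $\sigma$-coordinates of the $2k$ arcs there an arithmetic progression of common step $\pi/k$. Plugging into \autoref{obs:tilt-angle} yields $\sigma^t_{2i}-\sigma^s_{2i+1}=2\pi/k+2\varphi_i-\pi$, and since both progressions have the same step the left side is independent of $i$, so every $F_i$ carries the same tilt $\varphi$, determined by the phase offset between the two progressions. A lower bound on $\varphi$ would then be extracted from the filled faces $sb_{2i-1}tb_{2i}$, each of which contains $2k-2$ circular arcs from $b_{2i-1}$ to $b_{2i}$ through the red vertices (each red has degree two and so produces a straight angle of $\pi$). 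Equiangularity at the blues spaces these $2k-2$ arcs by $\pi/k$ in tangent direction at each endpoint, and the inscribed-angle theorem together with planarity pins the chord $b_{2i-1}b_{2i}$ into a narrow range; translating this range back through the bipolar phase will force $\varphi \geq 3\pi/4$ whenever $k>8$.

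The second main step applies \autoref{lem:lift} cyclically. With $\varphi \geq 3\pi/4$, the blue $b_{2i+2}$ sits across the filled face adjacent to arc $tb_{2i+1}$, inside the bigon between that arc and the enclosing circle of $F_i$; the $\pi/2$-angle hypothesis of the lemma follows from the equiangular $\pi/k$-spacing at $s$ and $t$ together with the high-tilt conclusion. The lemma then delivers $\tau_{b_{2i+2}}>\tau_{b_{2i}}$ for every $i$, and chaining these strict inequalities around $b_2,b_4,\dots,b_{2k},b_2$ produces a strict cyclic $\tau$-increase, the required contradiction.

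The hardest part will be the lower bound on $\varphi$ in the first step. The lemmas already developed target a single equiangular arc-quadrilateral, so none handles the $(2k-2)$-arc fan inside a filled face directly; a fresh inscribed-angle computation (likely after a M\"obius normalization placing the two blues of a pair at canonical symmetric positions) will be required, and the numerical threshold $k>8$ should emerge from that estimate.
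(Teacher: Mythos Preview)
Your overall architecture matches the paper's: equal tilts for all the empty quadrilaterals via \autoref{obs:tilt-angle}, a tilt lower bound exceeding $3\pi/4$ once $k>8$, and then a cyclic strict $\tau$-increase from \autoref{lem:lift}. The gap is in how you obtain the tilt bound. You want to extract it from the $2k-2$ red paths inside each filled region, but a degree-two red vertex only forces its two incident arcs to be \emph{tangent}; it does not force them onto a common circle. Each path $b_{2i-1}$--red--$b_{2i}$ is therefore a biarc, not a single circular arc, and the inscribed-angle relation linking the tangent directions at the two blue endpoints dissolves: for a biarc the two endpoint tangents can be prescribed independently, with the red vertex's position absorbing the extra degree of freedom. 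That removes the mechanism by which you hoped to ``pin the chord $b_{2i-1}b_{2i}$ into a narrow range,'' and packing $2k-2$ disjoint biarcs with given endpoint tangents inside a region is a far weaker constraint than packing $2k-2$ disjoint circular arcs. You flag this step as the hard unfinished part, but as stated it is not merely unfinished---the tool you name does not apply.

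The paper bypasses the red structure entirely. Because each empty quadrilateral is cyclic (\autoref{lem:cyclic}), its enclosing circle $C_i$ already passes through both yellow foci; thus all $k$ circles $C_1,\dots,C_k$ lie in the pencil through $s$ and $t$, the angles between consecutive ones sum to $2\pi$, and pigeonhole yields a consecutive pair $C_i,C_{i+1}$ meeting at angle at most $2\pi/k$. The blue vertex $q_i$ lies on $C_i$ yet outside quadrilateral $i{+}1$, so an arc of $C_i$ must thread outside that quadrilateral down to one of the foci; near that focus the side of quadrilateral $i{+}1$ is trapped in the thin lune between $C_i$ and $C_{i+1}$, forcing the tilt to be at least $\pi(1-2/k)$. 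The threshold $k>8$ then drops out directly, with no red vertices entering the argument at all.
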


\begin{proof}
Let $s$ and $t$ be the two yellow vertices of the graph $B(k)$.
We will consider a bipolar coordinate system with foci $s$ and $t$.
Note that graph $B(k)$ contains $k$ quadrilateral faces $sp_itq_i$, where $p_i$ and $q_i$ are blue vertices.
Because all four vertices of these quadrilaterals have degree $2k$, these quadrilaterals must be drawn (if a planar Lombardi drawing is to exist) as equiangular arc-quadrilaterals with interior angle $\pi/k$.

\begin{figure}[t]
\centering\includegraphics[width=0.8\columnwidth]{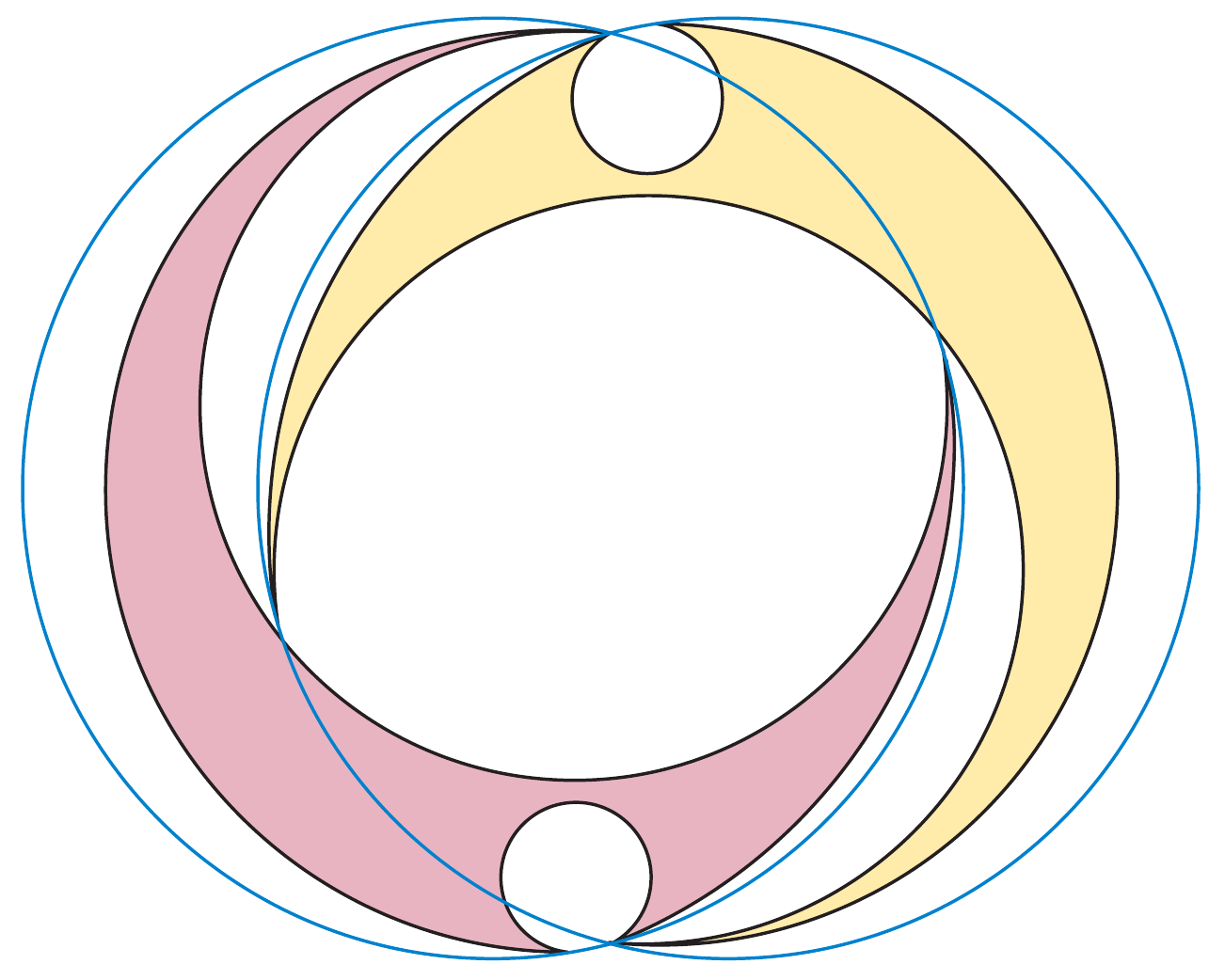}
\caption{Two arc-quadrilaterals with sharp angles between them at their shared vertices reach into each other's pockets to touch their circumscribing circles. The two smaller pockets on the outer arcs of the circles have significantly different $\tau$-coordinates from each other in bipolar coordinates with the shared vertices as foci.}
\label{fig:reacharound}
\end{figure}

Consider the two consecutive quadrilaterals $sp_itq_i$ and $sp_{i+1}tq_{i+1}$ whose enclosing circles $C_i$ and $C_{i+1}$ meet each other at the sharpest angle of any two consecutive enclosing circles. The sum of the angles between the $k$ consecutive circles is $2\pi$ so this minimum angle is at most $2\pi/k$. In order for point $q_i$ to lie on circle $C_i$, some arc of circle $C_i$ must lie on the same side as $q_i$ of quadrilateral $sp_{i+1}tq_{i+1}$. This arc must stay outside of quadrilateral $sp_{i+1}tq_{i+1}$ from its crossing point with the quadrilateral until terminating at either $s$ or $t$; by symmetry, we can assume without loss of generality that it terminates at the lower vertex $s$, as shown in \autoref{fig:reacharound}. Then, near $s$, quadrilateral $sp_{i+1}tq_{i+1}$ lies between circles $C_i$ and $C_{i+1}$, so it must have tilt at least $\pi(1-\tfrac{2}{k})$. By \autoref{obs:tilt-angle} and the equal spacing of angles around $s$ and $t$, all quadrilaterals $sp_itq_i$ must have the same tilt.

Because $k>8$, this tilt is $\ge 3\pi/4$, so each quadrilateral $sp_itp_i$ meets the precondition of having high tilt of \autoref{lem:lift}. For any quadrilateral $sp_itq_i$, the point $p_{i+1}$ of the next quadrilateral is connected to $s$ by an arc of quadrilateral $sp_{i+1}tq_{i+1}$ that lies entirely within $C_{i+1}$ and makes an angle of $\pi/k$ to arc $sq_i$, 
so the arc of $C_{i+1}$ containing $p_{i+1}$ makes an angle of at most $3\pi/k$ to the arc of $C_i$ containing $p_i$.
Thus, point $p_{i+1}$ meets the other precondition of \autoref{lem:lift} for the position of the point $r$ with respect to the quadrilateral.
By this lemma, each point $p_{i+1}$ has a greater $\tau$-coordinate than $p_i$. But it is impossible for this monotonic increase in $\tau$-coordinates to continue all the way around the circle of quadrilaterals surrounding the two foci and back to the starting point. This impossibility shows that the drawing cannot exist.
\end{proof}

\begin{theorem}
For $k>8$ the series-parallel graph $S(k)$, embedded as shown in \autoref{fig:series-parallel}, does not have a planar Lombardi drawing.
\end{theorem}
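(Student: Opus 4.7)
The plan is to mirror the proof of \autoref{thm:bipartite-nonplanar}, adapting each step to the cyclic structure at the two yellow vertices of $S(k)$. I would let $s$ and $t$ be the yellow vertices and set up the bipolar coordinate system with foci $s$ and $t$. Among the $2k$ yellow--blue quadrilaterals that make up the $K_{2,2k}$ skeleton of $S(k)$, exactly $k$ are empty faces $sp_itq_i$ while the other $k$ contain four red paths each. Every vertex of an empty face has degree $4k$, so in any hypothetical planar Lombardi drawing each such face must be drawn as an equiangular arc-quadrilateral with interior angle $\pi/(2k)$, and \autoref{lem:cyclic} then provides an enclosing circle $C_i$ through $s$ and $t$ for each.

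The angular accounting at $s$ has to take the filled regions into account. In the cyclic order of neighbors of $s$, each empty face contributes a single sector of width $\pi/(2k)$ (between its two blue neighbors of $s$), while each filled face contributes three such sectors (the block blue--red--red--blue formed by its two blue neighbors and the two red path endpoints meeting $s$ inside the face). An averaging argument analogous to the one in the bipartite proof --- the sum of the angles between consecutive enclosing circles $C_i$ at $s$ still equals $2\pi$ --- shows that some consecutive pair of circles meets at $s$ at angle at most $2\pi/k$, and the same thin-lens reasoning forces the adjacent empty quadrilateral $sp_{i+1}tq_{i+1}$ to have tilt at least $\pi(1-2/k)\ge 3\pi/4$. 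By \autoref{obs:tilt-angle} together with the equal angular spacing of edges at $s$ and $t$, all $k$ empty quadrilaterals then share this same high tilt.

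The finishing step is an application of \autoref{lem:lift}. Just as in the bipartite case, the vertex $p_{i+1}$ of the next empty quadrilateral lies on $C_{i+1}$ on the side of the filled region adjacent to $q_i$, and the reacharound argument of \autoref{fig:reacharound} puts $p_{i+1}$ inside the bigon between arc $tq_i$ and $C_i$. Once equal tilts are used, the arcs $sp_it$ and $sp_{i+1}t$ differ by only a small constant multiple of $\pi/k$ in tangent direction at $s$, which is well below $\pi/2$ for $k>8$, so both hypotheses of \autoref{lem:lift} are satisfied. We conclude $\tau_{p_{i+1}}>\tau_{p_i}$ for each $i$, and iterating this strict inequality around the cyclic sequence of the $k$ empty quadrilaterals produces a monotonic loop of $\tau$-values --- a contradiction.

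The hard part, I expect, is checking the reacharound geometry under the new cyclic order at $s$. In the bipartite proof, $q_i$ and $p_{i+1}$ are immediately consecutive blue neighbors of $s$ separated by a single sector of $\pi/k$; in $S(k)$ they are separated by three sectors of $\pi/(2k)$, with two intermediate red neighbors of $s$ in between. One has to verify that the presence of these intermediate red edges (and of the red subgraph drawn inside the filled face) cannot push $p_{i+1}$ out of the bigon between arc $tq_i$ and $C_i$ into which the bipartite argument placed it, and that the equal-tilt symmetry still forces the arcs $sp_it$ and $sp_{i+1}t$ to be as close as claimed. Once these purely geometric facts are checked, the remainder is a direct translation of the bipartite argument.
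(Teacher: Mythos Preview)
Your plan is correct and is exactly the paper's approach: the paper's proof consists of the single observation that the argument for \autoref{thm:bipartite-nonplanar} used only the presence of $k$ quadrilateral faces sharing the two yellow vertices with all four corners of equal degree, and $S(k)$ has this same structure (with degree $4k$ in place of $2k$). The ``hard part'' you flag is not a genuine obstacle, because $B(k)$ itself already has nontrivial filled regions (the red $K_{2,2k-2}$ subgraphs) between consecutive empty quadrilaterals, and the proof of \autoref{thm:bipartite-nonplanar} never looks inside them; every step---the reacharound, the equal-tilt deduction via \autoref{obs:tilt-angle}, and the hypotheses of \autoref{lem:lift}---depends only on the tangent directions of the arcs $sp_i$, $sq_i$, $tp_i$, $tq_i$ at the two foci, and those are equally spaced with period $2\pi/k$ in $S(k)$ (four sectors of width $\pi/(2k)$) just as in $B(k)$ (two sectors of width $\pi/k$).
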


\begin{proof}
As with $B(k)$, this graph contains $k$ quadrilateral faces, sharing the same two opposite yellow vertices, in which all vertices have equal degree ($4k$ in $S(k)$ instead of $2k$ in $B(k)$). The proof of \autoref{thm:bipartite-nonplanar} used only this property of $B(k)$, and not the precise value of the interior angle of these quadrilaterals, so it applies equally well to $S(k)$.
\end{proof}

We remark that the construction of $S(k)$ can be adjusted in several different ways to obtain more constrained families of embedded series-parallel and related graphs that, again, have no planar Lombardi drawing:
\begin{itemize}
\item If we add an edge between the two yellow vertices, and adjust the lengths of the red chains to keep the yellow and blue degrees equal, we obtain a family of embedded maximal series-parallel graphs (that is, embedded 2-trees) with no planar Lombardi drawing.
\item If we subdivide the yellow--red and red--red edges of $S(k)$, we obtain a family of embedded bipartite series-parallel graphs with no planar Lombardi drawing.
\item If we replace the red chains of $S(k)$ by an appropriate number of degree-one red vertices, connected to the blue vertices, we obtain a family of embedded apex-trees (graphs formed by adding a single vertex to a tree) with no planar Lombardi drawing. The apex vertex (the vertex whose removal produces a tree) can be chosen to be either of the two yellow vertices.
\end{itemize}
We omit the details.

\section{Conclusions}

We have shown that bipartite planar graphs, and series-parallel graphs with a fixed planar embedding, do not always have planar Lombardi drawings, even though their low degeneracy implies that they always have (nonplanar) Lombardi drawings. In the question of which important subfamilies of planar graphs have planar Lombardi drawings, several important cases remain unsolved. These include the outerplanar graphs, both with and without assuming an outerplanar embedding, the cactus graphs, and the series-parallel graphs without a fixed choice of embedding. We leave these as open for future research.

\bibliographystyle{amsplainurl}
\bibliography{lombardi}

\end{document}